\newcommand\id{\mathds{1}}
\DeclareMathOperator{\Tr}{Tr}
\theoremstyle{plain}
\newtheorem{Thm}{Theorem}[section]
\newtheorem{Pro}[Thm]{Proposition}
\theoremstyle{remark}
\begin{document}

\title{Information-theoretic constraints on correlations with indefinite causal order}

\author{Issam Ibnouhsein} \email{issam.ibnouhsein@gmail.com} \affiliation{CEA-Saclay/IRFU/LARSIM, 91191 Gif-sur-Yvette, France} \affiliation{Université Paris-Sud, 91405 Orsay, France}\affiliation{Quantmetry, 55 Rue La Boétie, 75008 Paris, France}
\author{Alexei Grinbaum} \affiliation{CEA-Saclay/IRFU/LARSIM, 91191 Gif-sur-Yvette, France}

\begin{abstract}
Reconstructions of quantum theory usually implicitly assume that experimental events are ordered within a global causal structure. The process matrix framework accommodates quantum correlations that violate an inequality verified by all causally ordered correlations. Using a generalized probabilistic framework, we propose three principles constraining bipartite correlations to the quantum bound. Our approach highlights the role of a measure of dependence other than mutual information for an information-theoretic reconstruction of causal structures in quantum theory.
\end{abstract}

\pacs{03.65.Ud, 03.67.-a, 04.20.Gz}

\maketitle

\section{INTRODUCTION}

A physical theory is a coherent set of mathematical rules that correlate data recorded in experiments. Quantum theory is one such set of rules, however its different interpretations have produced no consensus on what these rules say about ``reality''. A different approach to understanding quantum theory is to modify some of its rules and compare predictions of the modified theory with those of the original. Previous attempts include quaternionic models \cite{adler} and a model with nonlinear terms in the Schrödinger equation \cite{weinberg2}. More recently, quantum information theory has triggered a new development: instead of modifying the set of mathematical rules of quantum theory, one tries to derive (a subset of) these rules from clear informational principles. Reconstructing quantum theory then means that one should look for clearly motivated constraints on the correlations between experimental records, such that they (partially) reproduce the predictions of the quantum formalism \cite{grinbjps}. For instance, general nonsignaling models such as the Popescu-Rohrlich boxes \cite{popescu_quantum_1994} have numerous properties in common with quantum theory, including no cloning \cite{masanes_general_2006,barrett_information_2007}, no broadcasting \cite{barnum_generalized_2007}, monogamy of correlations \cite{masanes_general_2006}, and information-disturbance trade-offs \cite{barrett_no_2005,scarani_secrecy_2006}. Nonetheless, some supraquantum models have powerful communication \cite{van_dam_nonlocality_2000,buhrman_nonlocality_2010} or nonlocal computation properties \cite{linden_quantum_2007} unobserved in nature. The set of quantum correlations is then partially derived from various principles such as relaxed uncertainty relations \cite{steeg_relaxed_2008,oppenheim_uncertainty_2010}, nonlocality swapping \cite{skrzypczyk_couplers_2009,skrzypczyk_emergence_2009}, macroscopic locality \cite{navascues_glance_2010}, and information causality  \cite{pawlowski_information_2009,allcock_recovering_2009}.

Various reconstructions of quantum theory \cite{zeilinger_foundational_1999,hardy_quantum_2001,spekkens_defense_2004,dakic_quantum_2009,chiribella_informational_2010,Masanes} assume, most often implicitly, that experimental events are ordered within a global causal structure. For example, Hardy proposed in \cite{hardy_quantum_2001} a reconstruction using as primitives the preparation, the transformation, and the measurement (PTM). Physical systems are defined in his reconstruction by two numbers: the number of degrees of freedom $K$, representing the minimum number of measurements to determine the state of the system; and the dimension $N$, corresponding to the maximum number of states perfectly distinguishable in one measurement of the system. The assumption of a global causal structure is encoded in how systems compose. Indeed, consider a composite system with subsystems $A$ and $B$. Hardy's fourth axiom expresses the operationally defined parameters $K_{AB}$ and $N_{AB}$ of the composite system in terms of the parameters of subsystems $A$ and $B$:
\begin{equation}
\nonumber
N_{AB}=N_A N_B, \hspace{0.3cm} K_{AB}=K_A K_B.
\end{equation}
This definition implies that only a superobserver can calculate $K_{AB}$ and $N_{AB}$, for it requires PTM on each subsystem even if $A$ and $B$ are not localized in the same laboratory. This in turn implies the existence of a global structure ordering PTM events, a problem already discussed by Hardy that lead him to build one of the first operational frameworks with no assumption of the existence of a global causal structure ordering events \cite{hardy_probability_2005,hardy_towards_2007}. To cite another example, Rovelli argued informally that quantumness follows from a limit on the amount of ``relevant'' information that can be extracted from a system~\cite{rovelli_relational_1996}. If the notion of relevance is to be connected to lattice orthomodularity in the quantum logical framework~\cite{grinbijqi}, the ensuing reconstruction of quantum theory will fundamentally depend on the order of binary questions asked to the system. For many systems, it requires the existence of a global causal structure ordering all incoming information.

Efforts in the direction initiated by Hardy were continued by Chiribella \emph{et al.} \cite{chiribella_quantum_2013} and Oreshkov \emph{et al.} \cite{oreshkov_quantum_2012}. We begin by presenting the latter framework in Sec.~\ref{sectionPMF}. A generalized notion of the quantum state, called ``process matrix", describes all possible correlations between two physical systems under the assumption that quantum theory is valid in local laboratories, but without assuming that these laboratories are embedded in a global causal structure. Certain correlations allowed by this framework violate a ``causal inequality" verified by all correlations between causally ordered events. The value of the bound on such correlations, which we call the ``quantum bound", was shown to be maximal for qubits and under a restricted set of local operations involving traceless binary observables \cite{brukner_bounding_2014}. In Sec.~\ref{sectionGPF}, we build a generalized probabilistic framework using primitives whose importance and relationships are discussed in detail. In Sec. \ref{sectionDPI} we show that the quantum bound can be derived within this probabilistic framework from a constraint on mutual information between parties that extends the usual data processing inequality (DPI) in a certain way. In Sec. \ref{sectionRAC}, we reformulate the Oreshkov \emph{et al.} causal game as a random access code (RAC) and define another class of causal games such that any protocol defined within a global causal structure obeys a tight information-theoretic inequality. Relaxing the signaling possibilities to the set of correlations obeying this information-theoretic inequality excludes supraquantum correlations and leads to a derivation of the quantum bound on correlations with indefinite causal order. Finally, we discuss alternative informational principles based on a measure of dependence other than mutual information which are able to distinguish between supraquantum, causally ordered, and quantum correlations with indefinite causal order. These results further contribute to our understanding of the causal structure of quantum theory via information-theoretic principles.

\section{THE PROCESS MATRIX FRAMEWORK}
\label{sectionPMF}

Consider a fixed number of laboratories equipped with random bit generators and observers capable of free choice. In each run of the experiment, each laboratory receives exactly one physical system, performs transformations allowed by quantum theory, and subsequently sends the system out. Suppose each laboratory is isolated from the rest of the world, except when it receives or emits the system.

\subsection{General framework} 

Denote the input and the output Hilbert spaces of Alice by $\mathcal{H}^{A_1}$ and $\mathcal{H}^{A_2}$ and those of Bob by $\mathcal{H}^{B_1}$ and $\mathcal{H}^{B_2}$. The sets of all possible outcomes of a quantum instrument at Alice's (Bob's) laboratory corresponds to the set of completely positive (CP) maps  $\{\mathcal{M}^{A_1 A_2}_i\}_{i=1}^n$ ($\{\mathcal{M}^{B_1 B_2}_j\}_{j=1}^n$). Using the Choi-Jamio\l kowsky isomorphism, we can express a CP map, $\mathcal{M}^{A_1 A_2}_i: \mathcal{L}(\mathcal{H}^{A_1})\longrightarrow\mathcal{L}(\mathcal{H}^{A_2})$, at Alice's laboratory as a positive semidefinite operator, $M^{A_1 A_2}_i$, acting on $\mathcal{H}^{A_1}\otimes \mathcal{H}^{A_2}$, and a CP map, $\mathcal{M}^{B_1 B_2}_j:\mathcal{L}(\mathcal{H}^{B_1})\longrightarrow\mathcal{L}(\mathcal{H}^{B_2})$, at Bob's laboratory as a positive semidefinite operator, $M^{B_1 B_2}_j$, acting on $\mathcal{H}^{B_1}\otimes \mathcal{H}^{B_2}$. Using this correspondence, the noncontextual probability for two measurement outcomes can be expressed as a bilinear function of the corresponding Choi-Jamio\l kowsky operators,
\begin{small}
\begin{equation}
\nonumber
P(\mathcal{M}^{A_1 A_2}_i,\mathcal{M}^{B_1 B_2}_j)=\Tr\left[W^{A_1 A_2 B_1 B_2} \left(M^{A_1 A_2}_i\otimes M^{B_1 B_2}_j\right) \right],
\end{equation}
\end{small}where $W^{A_1 A_2 B_1 B_2} \in \mathcal{L}(\mathcal{H}^{A_1}\otimes\mathcal{H}^{A_2}\otimes
\mathcal{H}^{B_1}\otimes\mathcal{H}^{B_2})$ is fixed for all runs of the experiment. Requiring that such probabilities be non-negative for any choice of CP maps, and equal to $1$ for any choice of CP and trace-preserving (CPTP) maps, yields a space of valid $W$ operators referred to as ``process matrices".

\subsection{The causal game} 

In this framework, two parties, Alice and Bob, each receive a system in their laboratory. Each of them tosses a coin, whose value is denoted $a$ for Alice and $b$ for Bob. They additionally share a random task bit $b'$ with the following meaning: if $b' = 0$, Bob must communicate $b$ to Alice; and if $b' = 1$, Bob must guess the value of $a$. Both parties always produce a guess, denoted $x$ for Alice and $y$ for Bob. It is crucial to assume that the bits $a$, $b$, and $b'$ are random.

The goal of Alice and Bob is to maximize the probability of success,
\begin{equation}
\nonumber
P_{success}=\frac{1}{2} \left[p(x=b|b'=0)+p(y=a|b'=1)\right],
\end{equation}
i.e. Alice should guess Bob's toss, or vice versa, depending on the value of $b'$. If all events occur in a causal sequence, then
\begin{equation}
\label{causineq}
P_{success}\leq \frac{3}{4}.
\end{equation}
Indeed, it is true that either Alice cannot signal to Bob or Bob cannot signal to Alice. Consider the latter case. If $b' = 1$, Alice and Bob could in principle achieve up to $p(y = a|b' = 1) = 1$. However, if $b' = 0$, Alice can only make a random guess, hence $p(x = b|b' = 0) = \frac{1}{2}$ and the probability of success in this case satisfies \eqref{causineq}. The same argument shows that the probability of success will not increase when Alice cannot signal to Bob or under any mixing strategy.

\bigskip

Now consider the following process matrix using the usual Pauli matrices $\sigma_x, \sigma_y$ and $\sigma_z$,
\begin{small}
\begin{equation}
\label{W-violation}
W^{A_1 A_2 B_1 B_2}=\frac{1}{4}\left[\id^{A_1 A_2 B_1 B_2}+\frac{1}{\sqrt{2}}\left(\sigma^{A_2}_z\sigma^{B_1}_z+\sigma^{A_1}_z\sigma^{B_1}_x\sigma^{B_2}_z\right)\right],
\end{equation}
\end{small}where $A_1,A_2,B_1$, and $B_2$ are two-level systems. Consider the following CP maps at Alice's and Bob's laboratories, respectively,
\begin{small}
\begin{equation}
\label{local-maps}
\begin{aligned}
\xi^{A_1 A_2}(x,a,b') & =\frac{1}{2}\left[\id+(-1)^x\sigma_z\right]^{A_1}\otimes \left[\id+(-1)^a\sigma_z\right]^{A_2},\\
\eta^{B_1 B_2}(y,b,b')&=b'\cdot \eta^{B_1 B_2}_1(y,b,b')  +(b'\oplus 1)\cdot \eta^{B_1 B_2}_2(y,b,b'),
\end{aligned}
\end{equation}
\end{small}where $\eta^{B_1 B_2}_1(y,b,b') =\frac{1}{2}\left[\id+(-1)^y\sigma_z\right]^{B_1}\otimes \id^{B_2}$ and $\eta^{B_1 B_2}_2(y,b,b')=\frac{1}{2}\left[\id^{B_1 B_2}+(-1)^b\sigma_x^{B_1}\sigma_z^{B_2}\right]$. Computations show that the success probability associated with \eqref{W-violation} and \eqref{local-maps} violates causal inequality \eqref{causineq}:
\begin{equation}
\label{quantum-bound}
P_{success}=\frac{2+\sqrt{2}}{4}>\frac{3}{4}.
\end{equation}
Hence it is impossible to interpret these events as occurring within a global causal structure. This is an example of a causally nonseparable process, \emph{viz.}, a process that cannot be written as (a mixture of) causal processes,
\begin{equation}
W\neq \lambda W^{A \npreceq  B}+(1-\lambda )W^{B\npreceq  A},\label{causform}
\end{equation}where $0\leq \lambda \leq 1$, $W^{A\npreceq  B}$ is a process in which Alice cannot signal to Bob and $W^{B\npreceq  A}$ a process in which Bob cannot signal to Alice. ``Cannot signal'' here means either that the channels go in the other direction or that parties share a bipartite state. If a process matrix $W$ can be written in the form \eqref{causform}, it will be called ``causally separable".

\section{GENERALIZED PROBABILISTIC FRAMEWORK}
\label{sectionGPF}

We now aim at building a probabilistic framework using only the input bits, the output bits, a random task bit, and the notion of causal order as primitives to describe the experiment that violates the causal inequality. Using bits $x,y,a$ and $b$ defined in the previous section is necessary for the construction of this framework: they correspond to the information a party wants to send or to the result of a measurement on the received system that allows a party to retrieve the information the distant party transmitted. Similarly, the notion of causal order is necessary to establish the inequalities one is trying to violate using quantum theory. On the contrary, the role of the random task bit $b'$ and whether this bit should be included as a primitive in the probabilistic framework needs to be clarified.

Consider, as in the previous game, two parties Alice and Bob with inputs $a,b$ and outputs $x,y$ with obvious notations. Bob also possesses a random task bit, $b'$. Now, suppose we are given a quantum process matrix and a strategy (with local quantum operations) by means of which we realize a specific joint probability distribution $p(x,y|a,b)$ after tracing over the random task bit $b'$,
\begin{equation}
p(x,y|a,b) = \sum_{\alpha} p(x,y|a,b,b'=\alpha)p(b'=\alpha),
\end{equation}thus yielding a new effective strategy. We show that if for each fixed value $\alpha$ of $b'$, $p(x,y|a,b,b'=\alpha)$ can be realized using \emph{fixed} local quantum instruments, i.e. independent of $a$ and $b$, then there exists an equivalent diagonal quantum process by means of which we obtain the same probabilities $p(x,y|a,b)$ for all $a,b,x,y$. Since a diagonal bipartite process is causally separable \cite{oreshkov_quantum_2012}, $p(x,y|a,b)$ arising from such an effective strategy cannot violate \emph{any} causal inequality. It is crucial for the argument that the effective local operations can be taken to be diagonal in a \emph{fixed} local basis so that there exists a \emph{single} diagonal process matrix that yields the joint probabilities \emph{for all $a,b,x, y$}. Obviously, if $x$ and $y$ are produced before $a$ and $b$ and after $b'$, then the quantum instruments whose outcomes yield $x$ and $y$ cannot depend on $a$ and $b$, and hence can be considered as fixed for each fixed value of $b'$. Since the operations \eqref{local-maps}, allowing a violation of the causal inequality verify the following local ordering constraint
\begin{equation}
\label{local-ordering-constraint}
 b'  \preceq    y \preceq b \quad \mbox{ and } \quad x  \preceq a,
 \end{equation}
they also verify the constraint of fixed local quantum instruments for each fixed value of $b'$ and therefore cannot violate any causal inequality using only bits $x,y,a$ and $b$. Hence the necessity to include the random task bit $b'$ as a primitive in the generalized framework if we are to describe violations of the causal inequality under the local ordering constraint \eqref{local-ordering-constraint}.

We now prove our initial assumption. Consider a fixed value $\alpha$ of $b'$:
\begin{itemize}
\item[(a)] By assumption, the most general strategy for Bob is to apply a fixed quantum instrument denoted $I_1(\alpha)$ to the input system, whose outcome yields $y$, and to subject the output system of that instrument to a subsequent CPTP map dependent on the value of $b$, denoted $I_2(\alpha,b)$. 
\item[(b)] The first quantum instrument $I_1(\alpha)$ can be implemented by a unitary $U_1(\alpha)$ on the input system plus an ancilla, followed by a projective measurement $P(\alpha)$ on part of the resulting joint system \cite{oreshkov_quantum_2012}. The CPTP map $I_2(\alpha,b)$ can be implemented by a unitary $U_2(\alpha)$ applied to the output of $I_1(\alpha)$, an ancilla, and a qubit prepared in the state $|b\rangle$ (we feed $b$ in the form of a quantum state $|b\rangle$, where different vectors $|b\rangle$ are orthogonal). 
\item[(c)] The projective measurement $P(\alpha)$ and the preparation of $|b\rangle$ fully define Bob's operation: other transformations as well as the ancillae can be seen as occuring outside Bob's laboratory by attaching them to the original process \emph{before the input}, which yields a new equivalent process with a new process matrix that depends on $\alpha$ (note that here lies the aforementioned connection between an effective \emph{fixed} strategy for each value of $b'$ and the existence of a \emph{single} effective process: if the first local unitary before the projective measurement depends on $a$ or $b$, then for each particular value of $a$ or $b$ we can pull it out of the laboratory before the input system, but this does not yield one \emph{single} quantum process from which $p(x,y|a,b)$ is obtained with diagonal operations \emph{for all} $a,b,x$, and $y$). If the original matrix were valid, then whatever Bob may choose to do on his redefined input and output systems could have occurred anyway and would have yielded valid probabilities, hence the redefined process matrix is also valid. Here we focused on operations in Bob's laboratory, but similar arguments hold for operations in Alice's laboratory (which are independent from $b'$).  As a result, we obtain that the correlations for each value $\alpha$ of $b'$, and hence for the effective (mixed) strategy, are equivalent to correlations obtained by diagonal measurement and repreparation operations, i.e. classical local operations. 
\end{itemize}

In the remainder of the paper, when referring to runs of the Oreshkov \emph{et al.} game we are considering Alice and Bob in the context of the generalized probabilistic framework that uses as primitives the input bits, the output bits, a random task bit, and the notion of causal order. 

\section{MULTICHANNEL DATA PROCESSING INEQUALITY} 
\label{sectionDPI}

In this section, we derive the quantum bound on correlations with indefinite causal order from a constraint on mutual information Alice and Bob share very similar to a DPI. If the causal order between Alice and Bob is well defined, either Alice cannot signal to Bob or Bob cannot signal to Alice, a constraint that can be formulated using a measure of dependence (in the sense of Rényi \cite{renyi_new_1959,renyi_measures_1959}) such as mutual information as follows:
\begin{equation}
\label{one-con}
I(x:b|b'=0)+I(y:a|b'=1)\leq 1.
\end{equation}However, this condition is not sufficient for limiting correlations to the ones allowed by the process matrix framework because
\begin{small}
\begin{align}
I(x:b|b'=0)&=\frac{1+E_1}{2}\log_2(1+E_1)+\frac{1-E_1}{2}\log_2(1-E_1),\\
I(y:a|b'=1)&=\frac{1+E_2}{2}\log_2(1+E_2)+\frac{1-E_2}{2}\log_2(1-E_2),
\end{align}
\end{small}where $p(b\oplus x = 0|b'=0) =\frac{1+E_1}{2}$ and $p(a\oplus y= 0|b'=1) = \frac{1+E_2}{2}$, and one can show that there are supraquantum correlations with $E_1^2+E_2^2>1$ that verify \eqref{one-con}. Consequently, stronger constraints are needed.

\begin{Pro}
\label{prop-box}
Consider two independent runs of the Oreshkov \emph{et al.} game $(E_1^{(1)},E_2^{(1)},x_1,y_1,a_1,b_1,b'_1)$ and $(E_1^{(2)},E_2^{(2)},x_2,y_2,a_2,b_2,b'_2)$, where:
\begin{align}
\begin{aligned}
\label{assumption}
p(b_i\oplus x_i= 0|b'_i=0) & =\frac{1+E_1^{(i)}}{2},\\
p(a_i \oplus y_i= 0|b'_i=1) & = \frac{1+E_2^{(i)}}{2},\quad i=1,2.
\end{aligned}
\end{align}
The following two conditions are equivalent.

Condition (i):
\begin{equation}
\label{hypo1}
\begin{aligned}
I(x_1:b_1|b'_1=0) &\geq I(x_1\oplus x_2:b_1\oplus b_2|b'_1=0,b'_2=0)\\
&+I(x_1\oplus y_2:b_1\oplus a_2|b'_1=0, b'_2=1),
\end{aligned}
\end{equation}
\begin{equation}
\label{hypo2}
\begin{aligned}
I(y_1:a_1|b'_1=1) &\geq I(y_1\oplus x_2:a_1\oplus b_2|b'_1=1,b'_2=0)\\
&+I(y_1\oplus y_2:a_1\oplus a_2|b'_1=1, b'_2=1).
\end{aligned}
\end{equation}

Condition (ii):
\begin{equation}\label{hypo3}
(E_1^{(2)})^2+(E_2^{(2)})^2 \leq 1. 
\end{equation}
\end{Pro}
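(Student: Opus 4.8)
The plan is to collapse both inequalities of Condition~(i) onto a single elementary analytic inequality for $f(E):=\tfrac{1+E}{2}\log_2(1+E)+\tfrac{1-E}{2}\log_2(1-E)$ — the mutual information of a symmetric binary channel with uniform input, as written above — and then to analyse that inequality. For \emph{Step~1 (reduction)}, note that in the Oreshkov \emph{et al.} game every relevant channel ($b_1\to x_1$ given $b'_1=0$, $a_1\to y_1$ given $b'_1=1$, and likewise for the second run) is a symmetric binary channel with uniform input, so its mutual information equals $f$ of its correlation parameter. The key combinatorial fact is that the bitwise XOR of two \emph{independent} symmetric binary channels with parameters $E$ and $E'$ is again such a channel with parameter $EE'$: writing $Z_i$ for the error bit of run~$i$, $p(Z_1\oplus Z_2=0)=\tfrac{1+E}{2}\tfrac{1+E'}{2}+\tfrac{1-E}{2}\tfrac{1-E'}{2}=\tfrac{1+EE'}{2}$, and the combined input (e.g.\ $b_1\oplus b_2$ or $b_1\oplus a_2$) stays uniform because $a_i,b_i$ are fair coins independent across runs. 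Applying this to the four terms on the right of \eqref{hypo1}--\eqref{hypo2}, with the conditioning on $b'_i$ selecting $E_1^{(i)}$ (task~$0$) or $E_2^{(i)}$ (task~$1$), Condition~(i) becomes the pair $f(E_1^{(1)})\ge f(E_1^{(1)}E_1^{(2)})+f(E_1^{(1)}E_2^{(2)})$ and $f(E_2^{(1)})\ge f(E_2^{(1)}E_1^{(2)})+f(E_2^{(1)}E_2^{(2)})$, i.e.\ two instances of $f(E)\ge f(Et)+f(Es)$ with $t:=E_1^{(2)}$, $s:=E_2^{(2)}$.

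For \emph{Step~2 (Condition~(ii)$\Rightarrow$Condition~(i))}, put $g(E):=f(E)-f(Et)-f(Es)$. Since $f(0)=f'(0)=0$ and $f''(v)=\tfrac{1}{\ln 2\,(1-v^2)}$, the second-order Taylor formula with integral remainder gives $g(E)=\int_0^E (E-w)\,[\,f''(w)-t^2f''(tw)-s^2f''(sw)\,]\,dw$; as $E-w\ge0$ on $[0,E]$ it suffices to show, for $t^2+s^2\le1$ and $w\in[0,1)$, that $\tfrac{t^2}{1-t^2w^2}+\tfrac{s^2}{1-s^2w^2}\le\tfrac{1}{1-w^2}$. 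This follows because $\psi(u):=u/(1-uw^2)$ is increasing and convex on $[0,1/w^2)$ with $\psi(0)=0$, hence superadditive, so the left-hand side is $\le\psi(t^2+s^2)\le\psi(1)=1/(1-w^2)$. Thus $g\ge0$ throughout $[0,1]$, so both inequalities of Step~1 hold for \emph{any} reference run, in particular the one at hand.

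For \emph{Step~3 (Condition~(i)$\Rightarrow$Condition~(ii))}, use that Condition~(i) must hold whatever the first (reference) run is, in particular for one with small correlation $E_1^{(1)}$; expanding $f(E)=\tfrac{1}{2\ln 2}E^2+O(E^4)$ gives $0\le f(E_1^{(1)})-f(E_1^{(1)}t)-f(E_1^{(1)}s)=\tfrac{(E_1^{(1)})^2}{2\ln 2}\,(1-t^2-s^2)+O((E_1^{(1)})^4)$, which forces $t^2+s^2\le1$, i.e.\ $(E_1^{(2)})^2+(E_2^{(2)})^2\le1$.

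The main obstacle is the rational inequality in Step~2: the right move is to strip off the entropic nonlinearity by comparing second derivatives and then to recognise the resulting inequality as superadditivity of $u\mapsto u/(1-uw^2)$, after which convexity closes it. The XOR-composition bookkeeping of Step~1 and the small-correlation expansion of Step~3 are routine, although Step~1 does require checking that each XOR'd channel is genuinely symmetric with uniform input so that $I=f(\mathrm{correlation})$ applies verbatim.
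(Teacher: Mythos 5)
Your proof is correct, and it takes a genuinely different route from the paper's. The paper proves (ii)$\Rightarrow$(i) by invoking the strong data processing inequality with the Hirschfeld--Gebelein--R\'enyi maximal-correlation constant, $I(X:Z)\le\rho^*(Y:Z)^2 I(X:Y)$, for the Markov chain $b_1\to x_1\to x_1\oplus x_2\oplus b_2$, bounding each term on the right of \eqref{hypo1}--\eqref{hypo2} separately by $(E_1^{(2)})^2$ or $(E_2^{(2)})^2$ times the left-hand side and summing; the converse cites the variational characterization $\rho^*(Y:Z)^2=\sup_{X\to Y\to Z}I(X:Z)/I(X:Y)$. You instead reduce everything to the scalar inequality $f(E)\ge f(Et)+f(Es)$ via the XOR-composition rule $E\mapsto EE'$ (the same reduction the paper makes implicitly here and explicitly in its Sec.~V computation of $p_{n-k,k}$), and then prove that inequality by comparing second derivatives and recognizing superadditivity of $\psi(u)=u/(1-uw^2)$. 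This is more elementary and self-contained: it avoids citing the Erkip--Cover result, and the step $\psi(t^2)\le t^2\psi(1)$ in effect reproves the binary-symmetric-channel case of the strong DPI along the way. Your Step~3 is a concrete instantiation of the paper's appeal to the sup characterization, which for doubly symmetric binary sources is approached in the limit of vanishing input correlation. Your explicit flag that one must check the XOR'd channels remain symmetric with uniform input is warranted; the paper makes the same implicit assumption when it assigns ``transition parameters'' to the Markov chain.

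One caveat, which applies equally to the paper's own converse: as literally stated the proposition fixes both runs, and for a fixed first run the implication (i)$\Rightarrow$(ii) can fail. If $E_1^{(1)}=E_2^{(1)}=0$, Condition~(i) reads $0\ge 0$ and is vacuous; and the failure is not only a degeneracy, since $f(E)-f(Et)-f(Es)=\frac{1}{2\ln 2}\sum_{k\ge1}\frac{E^{2k}}{k(2k-1)}\left(1-t^{2k}-s^{2k}\right)$ can be nonnegative for a fixed $E>0$ with $t^2+s^2$ slightly above $1$, because the higher-order terms are then positive (e.g.\ $E_1^{(1)}=E_2^{(1)}=1$, $E_1^{(2)}=E_2^{(2)}=0.71$ gives $f(1)=1\ge 2f(0.71)\approx 0.81$ while $2\cdot(0.71)^2>1$). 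Your phrase ``Condition~(i) must hold whatever the first (reference) run is'' therefore adds a quantifier that is not in the hypothesis as written; the paper's converse needs the same reinterpretation, since its supremum ranges over all $X$ forming a Markov chain with $Y,Z$, not just $X=b_1$ with the given correlation. It would strengthen your write-up to state explicitly that the equivalence holds when \eqref{hypo1}--\eqref{hypo2} are imposed for all reference runs, or equivalently in the limit of weak first-run correlation.
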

\begin{proof}
Suppose that $(E_1^{(2)})^2+(E_2^{(2)})^2 \leq 1$ holds. Define the variables
\begin{align}
\begin{aligned}
 X& =b_1|[b'_1=0], \\
 Y&=x_1|[b'_1=0],\\
  Z&=x_1\oplus x_2 \oplus b_2|[b'_1=0,b'_2=0],
 \end{aligned}
 \end{align} where the entire expression on the left-hand side of the bar `$|$' is conditioned by the expression in brackets on the right-hand side. Because the two runs are assumed to be independent, one can see that 
\begin{equation}
X\rightarrow Y \rightarrow Z
\end{equation}
 is a Markov chain with transition parameters $p_1=\frac{1+E_1^{(1)}}{2}$ and $p_2=\frac{1+E_1^{(2)}}{2}$, therefore a strong form of the DPI applies \cite{Erkip98theefficiency},
\begin{equation}
\label{mrs}
I(X:Z)\leq \rho^*(Y:Z)^2 I(X:Y),
\end{equation}
where $\rho^*(Y:Z)$ defines the Hirschfeld-Gebelein-Rényi maximal correlation of variables $Y$ and $Z$ \cite{hirschfeld_connection_1935,Gebelein,renyi_new_1959,renyi_measures_1959}. Since $Y,Z$ are Bernoulli variables, we have $\rho^*(Y:Z)=2p_2-1=E_1^{(2)}$; therefore,
\begin{equation}
 I(x_1\oplus x_2:b_1\oplus b_2|b'_1=0,b'_2=0) \leq (E_1^{(2)})^2 I(x_1:b_1|b'_1=0).
 \end{equation}
Similarly, one can show that
\begin{equation}
 I(x_1\oplus y_2:b_1\oplus a_2|b'_1=0, b'_2=1)\leq (E_2^{(2)})^2  I(x_1:b_1|b'_1=0).
 \end{equation}
Therefore imposing $(E_1^{(2)})^2+(E_2^{(2)})^2 \leq 1$  implies \eqref{hypo1}. One can similarly show that $(E_1^{(2)})^2+(E_2^{(2)})^2 \leq 1$ also implies \eqref{hypo2}.

To prove the converse, we recall that since $Y,Z$ are Bernoulli variables, we have \cite{anantharam_maximal_2013}
\begin{equation}
\rho^*(Y:Z)^2=\underset{X\rightarrow Y \rightarrow Z}{\sup} \frac{I(X:Z)}{I(X:Y)}.
\end{equation}Using \eqref{mrs}, one can show that \eqref{hypo1} and \eqref{hypo2} imply \eqref{hypo3}.
\end{proof}

If and only if a causal order is fixed, Eqs. \eqref{hypo1} and \eqref{hypo2} take the form of the usual DPI. In general, however, these equations involve sums of variables from two possible causal orders for a single round, while the DPI requires that information be discarded in a fixed direction. Consequently, this alternative approach leads to two original conditions but their significance is blurred by the intertwining of causal orders.

\section{CAUSAL GAMES AS RANDOM ACCESS CODES}
\label{sectionRAC}

In this section, we reformulate the causal game as a distributed RAC. This is motivated by the RAC formulation of the game for which the information causality principle was introduced \cite{pawlowski_information_2009}. Such an approach might open the path for a formulation of an analog of the information causality principle in the context of causal games. 

\subsection{Reformulation of the causal game} 

Consider two independent runs of the experiment described in the Oreshkov \emph{et al.} game, with bits $\{x_1,y_1,a_1,b_1\}$ and $\{x_2,y_2,a_2,b_2\}$, respectively. The random task bit $b'$ now corresponds to a pair of bits $b'_1b'_2$ denoting the four possible combinations of tasks for two runs of the experiment: $b'=0_1 0_2$ means that in both runs Alice must guess Bob's bit, $b'=0_1 1_2$ means that Alice must guess Bob's bit in the first run and Bob must guess Alice's bit in the second run, and so forth. It is straightforward to generalize this notation for $n$ runs.

Assume that different runs of the experiment use the same box as a resource:
\begin{align}
\begin{aligned}
p(b_i\oplus x_i= 0|b'_i=0) &= p(b_j\oplus x_j= 0|b'_j=0), \\
p(a_i\oplus y_i= 0|b'_i=1) &= p(a_j\oplus y_j= 0|b'_j=1),  \quad \forall i,j.
\end{aligned}
\end{align}Again, we write
\begin{align}
\begin{aligned}
\label{assumption}
p(b_i\oplus x_i= 0|b'_i=0) &=\frac{1+E_1}{2},\\
p(a_i \oplus y_i= 0|b'_i=1) &= \frac{1+E_2}{2},\quad \forall i.
\end{aligned}
\end{align}Now consider $n$ runs of the experiment and define
\begin{equation}
\label{optimize-n}
\begin{aligned}
P_n =& \frac{1}{2^n}\left[p(b_1\oplus x_1 \oplus..\oplus b_n \oplus x_n=0|b'=0_10_2..0_{n})\right.\\
&+\left. p(b_1\oplus x_1 \oplus..\oplus b_{n-1} \oplus x_{n-1} \oplus a_n \right.\\
&\left. \oplus y_n=0| b'=0_1..0_{n-1}1_{n})\right.\\
&\left. +...+p(a_1\oplus y_1 \oplus..\oplus a_n \right.\\
&\left. \oplus x_n=0|b'=1_11_2..1_{n})\right].\\
\end{aligned}
\end{equation}For each term in brackets, the condition that the sum over the guesses for $n$ runs means that either both Alice and Bob make an even number of mistakes or both make an odd number of mistakes. We now compute the expression of a term $p_{n-k,k}$ inside the brackets for which the number of 0's in $b'$ is $n-k$ and the number of 1's is $k$. 

The probability of an even number of wrong guesses by Alice is
\begin{small}
\begin{align}
\begin{aligned}
Q^{(n-k)}_{even}(\mbox{Alice}) &= \sum\limits_{j=1}^{\lfloor \frac{n-k}{2}\rfloor}\binom{n-k}{2j} \left(\frac{1-E_1}{2}\right)^{2j}\left(\frac{1+E_1}{2}\right)^{n-k-2j}\\
&=\frac{1+E_1^{n-k}}{2}.
\end{aligned}
\end{align}
\end{small}Similarly, the probability of an odd number of wrong guesses by Alice is
\begin{small}
\begin{align}
\begin{aligned}
Q^{(n-k)}_{odd}(\mbox{Alice})&= \sum\limits_{j=1}^{\lfloor\frac{n- k-1}{2}\rfloor}\binom{n-k}{2j+1} \left(\frac{1-E_1}{2}\right)^{2j+1}\\
&\cdot \left(\frac{1+E_1}{2}\right)^{n-k-2j-1} =\frac{1-E_1^{n-k}}{2}.
\end{aligned}
\end{align}
\end{small}The probability of an even number of wrong guesses by Bob is
\begin{small}
\begin{align}
\begin{aligned}
Q^{(k)}_{even}(\mbox{Bob}) &= \sum\limits_{j=1}^{\lfloor \frac{k}{2}\rfloor}\binom{k}{2j} \left(\frac{1-E_2}{2}\right)^{2j}\left(\frac{1+E_2}{2}\right)^{k-2j}\\
&=\frac{1+E_2^k}{2}.
\end{aligned}
\end{align}
\end{small}Similarly, the probability of an odd number of wrong guesses by Bob is
\begin{small}
\begin{align}
\begin{aligned}
Q^{(k)}_{odd}(\mbox{Bob})&= \sum\limits_{j=1}^{\lfloor\frac{ k-1}{2}\rfloor}\binom{k}{2j+1} \left(\frac{1-E_2}{2}\right)^{2j+1}\left(\frac{1+E_2}{2}\right)^{k-2j-1}\\
&=\frac{1-E_2^k}{2}.
\end{aligned}
\end{align}
\end{small}The final expression for a term inside the brackets where the number of 1's in $b'$ is $k$ is
\begin{small}
\begin{align}
\begin{aligned}
& p_{n-k,k} \\
&=Q^{(n-k)}_{even}(\mbox{Alice})\cdot Q^{(k)}_{even}(\mbox{Bob})+Q^{(n-k)}_{odd}(\mbox{Alice})\cdot Q^{(k)}_{odd}(\mbox{Bob})\\
&= \frac{1}{2}\left[1+E_1^{n-k}E_2^k\right],
\end{aligned}
\end{align}
\end{small}and 
\begin{equation}
\label{rac_prob}
P_n=\frac{1}{2^n}\sum_{k=0}^{2^n-1}\binom{n}{k} p_{n-k,k}.
\end{equation}We now treat the two bits in $b'$ as binary notation of an integer and identify $b^\prime$ with this integer. For example, when $n=2$, $b'=01$ corresponds to 1 and $b^\prime = 10$ to 2. For a given decimal $b^\prime=i$, we group the runs by specifying an expression to be set to 0, which we denote $g_i \oplus t_i = 0$, where $g_i$ is the sum of output bits (`guesses') and $t_i$ the sum of input bits (`tosses'). To continue the example $n=2$, for $b'=1$ we set $x_1\oplus b_1 \oplus y_2\oplus a_2=0$ with the bit of guesses $g_1=x_1\oplus y_2$ and the bit of tosses $t_1=b_1\oplus a_2$. For $b'=2$ the corresponding expression is $y_1\oplus a_1 \oplus x_2\oplus b_2=0$ with the bit of guesses $g_2=y_1\oplus x_2$ and the bit of tosses $t_2=a_1\oplus b_2$. This provides a reformulation of the causal game as a RAC, where Eq. \eqref{rac_prob} is the probability one wants to maximize.

\subsection{Information-theoretic inequality for causal structures}

Using the same notation as in the previous section, we now introduce an information-theoretic inequality verified by all events occuring within a global causal structure. First, we need to demonstrate the following result.
\begin{Pro}
The following inequality holds:
\begin{equation}
\label{cond1}
\frac{(E_1^2+E_2^2)^{n}}{2\ln(2)} \leq I(n) \leq (E_1^2+E_2^2)^n,
\end{equation}where $I(n)=\sum_{i=0}^{2^n-1} I(g_i:t_i|b'=i)$ is a measure of efficiency of the $n$ runs protocol, $I(X:Y)$ denotes mutual information between random variables $X$ and $Y$, and $h$ is the binary entropy.
\end{Pro}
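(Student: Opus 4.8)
The plan is to evaluate every summand $I(g_i:t_i|b'=i)$ in closed form in terms of the single number $E_1^{\,n-k}E_2^{\,k}$, where $k$ is the Hamming weight of $i$ written in binary, reduce $I(n)$ to a binomial sum, and then sandwich that sum between the two quoted expressions by means of an elementary two-sided estimate for the binary-entropy deficit together with the binomial theorem.

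First I would note that, conditioned on $b'=i$, the ``bit of tosses'' $t_i$ is an exclusive OR of independent uniformly random input bits (one per run), hence itself uniform, while $p(g_i\oplus t_i=0\,|\,b'=i)$ is exactly the quantity $p_{n-k,k}=\tfrac12\bigl(1+E_1^{\,n-k}E_2^{\,k}\bigr)$ already computed in Sec.~\ref{sectionRAC}, with $k=k(i)$ the number of runs in which Bob must guess Alice's bit, i.e. the Hamming weight of $i$. Thus $(t_i,g_i)$ is a binary-symmetric pair and
\[
  I(g_i:t_i|b'=i)=1-h\!\Bigl(\tfrac{1+E_1^{\,n-k}E_2^{\,k}}{2}\Bigr)=:f\!\bigl(E_1^{\,n-k}E_2^{\,k}\bigr),
\]
where $f(c)=\tfrac{1+c}{2}\log_2(1+c)+\tfrac{1-c}{2}\log_2(1-c)$ is the function already met in Sec.~\ref{sectionDPI}. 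Since exactly $\binom{n}{k}$ integers in $\{0,\dots,2^{n}-1\}$ have Hamming weight $k$, summing over $i$ gives
\[
  I(n)=\sum_{k=0}^{n}\binom{n}{k}\,f\!\bigl(E_1^{\,n-k}E_2^{\,k}\bigr).
\]

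Next I would establish the two-sided bound $\dfrac{c^{2}}{2\ln2}\le f(c)\le c^{2}$ for $|c|\le1$ ($f$ being even, it suffices to take $c\in[0,1]$). Writing $f(c)=g(c)/(2\ln2)$ with $g(c)=(1+c)\ln(1+c)+(1-c)\ln(1-c)$, one has $g(0)=g'(0)=0$ and $g''(c)=2/(1-c^{2})$. From $g''\ge2$ one integrates twice to get $g(c)\ge c^{2}$, which is the lower estimate. For the upper estimate put $\phi(c)=2\ln2\cdot c^{2}-g(c)$; then $\phi(0)=0$, $\phi(1^{-})=0$, $\phi'(0)=0$, and $\phi''(c)=4\ln2-2/(1-c^{2})$ is strictly decreasing and hence changes sign exactly once on $(0,1)$, from $+$ to $-$; therefore $\phi'$ is first nonnegative and then nonpositive, so $\phi$ rises and then falls on $[0,1]$ and, taking the value $0$ at both endpoints, stays $\ge0$ throughout, i.e. $f(c)\le c^{2}$. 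Equivalently one may invoke the standard inequality $h(p)\ge4p(1-p)$ with $p=\tfrac{1+c}{2}$.

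Finally I would feed these bounds term by term into the binomial sum: using $f(c)\le c^{2}$ and the binomial theorem,
\[
  I(n)\le\sum_{k=0}^{n}\binom{n}{k}(E_1^{2})^{\,n-k}(E_2^{2})^{\,k}=(E_1^{2}+E_2^{2})^{n},
\]
while $f(c)\ge c^{2}/(2\ln2)$ yields the matching lower bound $I(n)\ge(E_1^{2}+E_2^{2})^{n}/(2\ln2)$, which is the assertion. The only step that is not pure bookkeeping is the upper half $f(c)\le c^{2}$ of the entropy-deficit estimate, so that is where I would concentrate the care; the rest is the channel identification, the Hamming-weight count, and one application of the binomial theorem.
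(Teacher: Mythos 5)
Your proof is correct and follows essentially the same route as the paper: express each $I(g_i:t_i|b'=i)$ as $1-h\bigl(\tfrac{1}{2}(1+E_1^{\,n-k}E_2^{\,k})\bigr)$, sandwich the entropy deficit between $c^2/(2\ln 2)$ and $c^2$, and sum over Hamming weights via the binomial theorem. The only addition is that you supply an explicit proof of the upper estimate $1-h\bigl(\tfrac{1+c}{2}\bigr)\leq c^2$ (equivalently $h(p)\geq 4p(1-p)$), which the paper uses without justification.
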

\begin{proof}
Recalling that $n-k$ is the number of 0's and $k$ is the number of 1's in the n bits binary notation of integer $i$, we have
\begin{equation}
\label{form}
\begin{aligned}
I(g_i:t_i|b'=i)&=\frac{1+E_1^{n-k}E_2^k}{2}\log_2 (1+E_1^{n-k}E_2^k) \\
&+ \frac{1-E_1^{n-k}E_2^k}{2}\log_2 (1-E_1^{n-k}E_2^k)\\
& = 1 - h\left(\frac{1}{2}(1+E_1^{n-k}E_2^k)\right)\\
& \geq \frac{(E_1^2)^{n-k}(E_2^2)^k}{2\ln 2},
\end{aligned}
\end{equation}where we used $h(\frac{1}{2}(1+y))\leq 1-\frac{y^2}{2 \ln 2}$. Therefore
\begin{align}
\begin{aligned}
I(n) & = \sum_{i=0}^{2^n-1}I(g_i:t_i|b'=i) \\
&\geq\frac{1}{2\ln 2 }\sum_{k=0}^{n} \binom{n}{k}(E_1^2)^{n-k}(E_2^2)^k\\
&=\frac{1}{2\ln 2}(E_1^2+E_2^2)^n.
\end{aligned}
\end{align}
We also have
\begin{equation}
\label{form}
\begin{aligned}
I(g_i:t_i|b'=i)&=\frac{1+E_1^{n-k}E_2^k}{2}\log_2 (1+E_1^{n-k}E_2^k) \\
&+ \frac{1-E_1^{n-k}E_2^k}{2}\log_2 (1-E_1^{n-k}E_2^k)\\
&\leq (E_1^2)^{n-k}(E_2^2)^k,
\end{aligned}
\end{equation}therefore
\begin{equation}
I(n) \leq \sum_{k=0}^n \binom{n}{k} (E_1^2)^{n-k}(E_2^2)^k = (E_1^2+E_2^2)^n.
\end{equation}
\end{proof}

We claim that any causally separable process verifies
\begin{equation}
\label{caus-ineq}
I(n)\leq 1, \quad \forall n,
\end{equation}and that 1 is the only nonzero bound on $I(n)$. To see this, consider a fixed causal structure and a given value $b' = i$. Then all $g_j \oplus t_j, j \neq i$, are equal to 0 with probability $\frac{1}{2}$; therefore $I(g_i:t_i|b'=i)\leq 1$ and $I(g_j:t_j|b'=j)=0$ for $j\neq i$, leading to $I(n)\leq 1$. The mutual information expression $I(X:Y|Z)$ is convex in $p(y|x,z)$, where $x,y$ and $z$ are values that the random variables $X,Y$ and $Z$ can, respectively, take, and therefore no mixture of strategies with fixed causal structures can increase the value of $I(n)$. Consequently, inequality \eqref{caus-ineq} is valid for all causally separable processes, and since condition $I(n) = 1, \forall n$, can be reached using a fixed causal structure, it is a \emph{tight} inequality.

We now ask whether this condition fully characterizes the set of causally separable processes and, if not, what the exact set of correlations verifying such a condition is. If the bounded efficiency condition is taken as a constraint on the correlations between Alice's and Bob's laboratories, one can show that the set of allowed correlations is those respecting the quantum bound. Indeed, Eq. \eqref{cond1} shows that a limit on the protocol efficiency \emph{for any number of runs} is equivalent to the bound 1 on $E_1^2+E_2^2$ or, equivalently, to the bound $\frac{1}{\sqrt{2}}$ on $E$, where $E=E_1=E_2$ if all probabilities \eqref{assumption}, are equal. Therefore, relaxing the signaling possibilities to the set of correlations obeying the tight information-theoretic inequality \eqref{caus-ineq}, verified by all events occuring within a global causal structure, allows us to retrieve the quantum bound on correlations with indefinite causal order.

This result is somewhat analogous to the principle of information causality where, given a set of ``classical" resources (shared nonsignaling correlations and one-way signaling) and a class of games (increasing size of Alice's data set), one can derive the quantum bound on correlations by keeping the same information-theoretic figure of merit quantifiying the performance of the parties in winning such games for classical and quantum resources. Note that this similarity is only intuitive and not at all rigorous, because in the context of no-signaling games, one can show that the principle of information causality is distinct from the ``no-supersignaling" principle, which encodes the idea that the protocol efficiency must not increase \cite{wakakuwa_chain_2012}. The main obstacle to a direct transposition of the proof of \cite{pawlowski_information_2009} to the RAC formulation of the causal game is the dependence between the guesses expressions $g_i$ (and similarly for tosses expressions $t_i$).

\subsection{Beyond mutual information}

Shifting the focus from mutual information to another measure of dependence, one can easily check that conditions \eqref{hypo1} and \eqref{hypo2} (or, alternatively, the quantum bound) are equivalent to imposing
\begin{equation}
\label{causal-ineq}
\rho^*(Y:Z)^2+\rho^*(Y:Z')^2\leq 1,
\end{equation}where we kept the notation from the corresponding proof and defined
\begin{equation}
 Z'=x_1\oplus y_2\oplus a_2|[b'_1=0,b'_2=1].
\end{equation}
 More generally, the quantum bound is equivalent to the following constraint
\begin{equation}
\label{causal-ineq2}
\rho^*(x|b'=0:b|b'=0)^2+\rho^*(y|b'=1:a|b'=1)^2\leq 1,
\end{equation}while causally separable processes are characterized by
\begin{equation}
\label{causal-ineq3}
\rho^*(x|b'=0:b|b'=0)+\rho^*(y|b'=1:a|b'=1)\leq 1.
\end{equation}
Since the HGR maximal correlation is also a measure of dependence, Eq. \eqref{causal-ineq3} has the same clear informational interpretation in terms of allowed signaling directions between parties within (a mixture of) fixed causal structures as Eq. \eqref{one-con}. The square of the HGR maximal correlation of Bernoulli variables, which appears in \eqref{causal-ineq2}, also has an information-theoretic interpretation: it quantifies the initial efficiency of communication between parties \cite{Erkip98theefficiency}. Indeed, taking $Y=x|[b'=0]$ and $Z=b|[b'=0]$ we obtain
\begin{equation}
\label{connect-HGR-MI}
\rho^*(Y:Z)^2=\Delta^\prime(0),
\end{equation}where $\Delta^\prime$ is the derivative of 
\begin{equation}
\Delta(R)= \sup\limits_{\substack{X \rightarrow Y \rightarrow Z\\I(X:Y)\leq R}} I(X:Z).
\end{equation}
Thus condition \eqref{causal-ineq2} means that the dependence between parties can exceed one bit as long as the total initial efficiency of communication does not exceed one bit.

In summary, equality \eqref{connect-HGR-MI} connects the HGR maximal correlation and the increase in mutual information. It is based on inequality \eqref{mrs} and is central to an information-theoretic interpretation of condition \eqref{causal-ineq2}. To prove \eqref{cond1} or \eqref{mrs}, one uses the standard symmetry, non-negativity, chain rule and data processing properties of mutual information. Therefore, the bound on quantum correlations with indefinite causal order is equivalent to imposing these standard properties on mutual information between inputs and outputs of parties, with an additional consistency condition for classical systems, so that mutual information between independent systems equals 0, along with one of condition \eqref{caus-ineq} or \eqref{causal-ineq2}.

\section{CONCLUSION}

We defined a generalized probabilistic framework to discuss the connection between the quantum bound on correlations with indefinite causal order and various information-theoretic principles. We have shown that the quantum bound on the causal game can be derived from a constraint on the mutual information shared by Alice and Bob that extends the usual DPI in a certain way. We have reformulated the causal game as a RAC and defined a new class of causal games for which all causally separable processes obey a tight information-theoretic inequality. By relaxing the signaling possibilities to the set of correlations that obey this information-theoretic inequality we retrieve the quantum bound on correlations with indefinite causal order. Using an alternative measure of dependence, we establish a relationship between the quantum bound and the initial efficiency of communication. Central to these derivations are standard properties of mutual information. Our approach highlights both qualitatively and quantitatively the fact that mutual information may not be the most convenient measure of dependence for causal games. Whether ``natural'' properties of alternative measures, e.g., HGR maximal correlation, lead to the quantum bound under the local ordering condition and for more general operations is currently under investigation.

\section*{ACKNOWLEDGMENTS}
We thank O. Oreshkov, F. Costa, M. Paw\l{}owski and Ä. Baumeler for helpful discussions. This work was supported by the European Commission Project Q-ESSENCE (No.248095).

\bibliographystyle{apsrev4-1}
\bibliography{paper}

\begin{thebibliography}{39}%
\makeatletter
\providecommand \@ifxundefined [1]{%
 \@ifx{#1\undefined}
}%
\providecommand \@ifnum [1]{%
 \ifnum #1\expandafter \@firstoftwo
 \else \expandafter \@secondoftwo
 \fi
}%
\providecommand \@ifx [1]{%
 \ifx #1\expandafter \@firstoftwo
 \else \expandafter \@secondoftwo
 \fi
}%
\providecommand \natexlab [1]{#1}%
\providecommand \enquote  [1]{``#1''}%
\providecommand \bibnamefont  [1]{#1}%
\providecommand \bibfnamefont [1]{#1}%
\providecommand \citenamefont [1]{#1}%
\providecommand \href@noop [0]{\@secondoftwo}%
\providecommand \href [0]{\begingroup \@sanitize@url \@href}%
\providecommand \@href[1]{\@@startlink{#1}\@@href}%
\providecommand \@@href[1]{\endgroup#1\@@endlink}%
\providecommand \@sanitize@url [0]{\catcode `\\12\catcode `\$12\catcode
  `\&12\catcode `\#12\catcode `\^12\catcode `\_12\catcode `\%12\relax}%
\providecommand \@@startlink[1]{}%
\providecommand \@@endlink[0]{}%
\providecommand \url  [0]{\begingroup\@sanitize@url \@url }%
\providecommand \@url [1]{\endgroup\@href {#1}{\urlprefix }}%
\providecommand \urlprefix  [0]{URL }%
\providecommand \Eprint [0]{\href }%
\providecommand \doibase [0]{http://dx.doi.org/}%
\providecommand \selectlanguage [0]{\@gobble}%
\providecommand \bibinfo  [0]{\@secondoftwo}%
\providecommand \bibfield  [0]{\@secondoftwo}%
\providecommand \translation [1]{[#1]}%
\providecommand \BibitemOpen [0]{}%
\providecommand \bibitemStop [0]{}%
\providecommand \bibitemNoStop [0]{.\EOS\space}%
\providecommand \EOS [0]{\spacefactor3000\relax}%
\providecommand \BibitemShut  [1]{\csname bibitem#1\endcsname}%
\let\auto@bib@innerbib\@empty
\bibitem [{\citenamefont {Adler}(1995)}]{adler}%
  \BibitemOpen
  \bibfield  {author} {\bibinfo {author} {\bibfnamefont {S.}~\bibnamefont
  {Adler}},\ }\href@noop {} {\emph {\bibinfo {title} {Quaternionic Quantum
  Mechanics and Quantum Fields}}}\ (\bibinfo  {publisher} {Oxford University
  Press, New York},\ \bibinfo {year} {1995})\BibitemShut {NoStop}%
\bibitem [{\citenamefont {Weinberg}(1989)}]{weinberg2}%
  \BibitemOpen
  \bibfield  {author} {\bibinfo {author} {\bibfnamefont {S.}~\bibnamefont
  {Weinberg}},\ }\href
  {http://journals.aps.org/prl/abstract/10.1103/PhysRevLett.62.485} {\bibfield
  {journal} {\bibinfo  {journal} {Phys. Rev. Lett.}\ }\textbf {\bibinfo
  {volume} {62}},\ \bibinfo {pages} {485} (\bibinfo {year} {1989})}\BibitemShut
  {NoStop}%
\bibitem [{\citenamefont {Grinbaum}(2007)}]{grinbjps}%
  \BibitemOpen
  \bibfield  {author} {\bibinfo {author} {\bibfnamefont {A.}~\bibnamefont
  {Grinbaum}},\ }\href {http://bjps.oxfordjournals.org/content/58/3/387}
  {\bibfield  {journal} {\bibinfo  {journal} {Br. J. Philos. Sci.}\ }\textbf
  {\bibinfo {volume} {58}},\ \bibinfo {pages} {387} (\bibinfo {year}
  {2007})}\BibitemShut {NoStop}%
\bibitem [{\citenamefont {Popescu}\ and\ \citenamefont
  {Rohrlich}(1994)}]{popescu_quantum_1994}%
  \BibitemOpen
  \bibfield  {author} {\bibinfo {author} {\bibfnamefont {S.}~\bibnamefont
  {Popescu}}\ and\ \bibinfo {author} {\bibfnamefont {D.}~\bibnamefont
  {Rohrlich}},\ }\href {http://link.springer.com/article/10.1007%2FBF02058098}
  {\bibfield  {journal} {\bibinfo  {journal} {Found. Phys.}\ }\textbf {\bibinfo
  {volume} {24}},\ \bibinfo {pages} {379} (\bibinfo {year} {1994})}\BibitemShut
  {NoStop}%
\bibitem [{\citenamefont {Masanes}\ \emph {et~al.}(2006)\citenamefont
  {Masanes}, \citenamefont {Acin},\ and\ \citenamefont
  {Gisin}}]{masanes_general_2006}%
  \BibitemOpen
  \bibfield  {author} {\bibinfo {author} {\bibfnamefont {L.}~\bibnamefont
  {Masanes}}, \bibinfo {author} {\bibfnamefont {A.}~\bibnamefont {Acin}}, \
  and\ \bibinfo {author} {\bibfnamefont {N.}~\bibnamefont {Gisin}},\ }\href
  {\doibase 10.1103/PhysRevA.73.012112} {\bibfield  {journal} {\bibinfo
  {journal} {Phys. Rev. A}\ }\textbf {\bibinfo {volume} {73}},\ \bibinfo
  {pages} {012112} (\bibinfo {year} {2006})}\BibitemShut {NoStop}%
\bibitem [{\citenamefont {Barrett}(2007)}]{barrett_information_2007}%
  \BibitemOpen
  \bibfield  {author} {\bibinfo {author} {\bibfnamefont {J.}~\bibnamefont
  {Barrett}},\ }\href {\doibase 10.1103/PhysRevA.75.032304} {\bibfield
  {journal} {\bibinfo  {journal} {Phys. Rev. A}\ }\textbf {\bibinfo {volume}
  {75}},\ \bibinfo {pages} {032304} (\bibinfo {year} {2007})}\BibitemShut
  {NoStop}%
\bibitem [{\citenamefont {Barnum}\ \emph {et~al.}(2007)\citenamefont {Barnum},
  \citenamefont {Barrett}, \citenamefont {Leifer},\ and\ \citenamefont
  {Wilce}}]{barnum_generalized_2007}%
  \BibitemOpen
  \bibfield  {author} {\bibinfo {author} {\bibfnamefont {H.}~\bibnamefont
  {Barnum}}, \bibinfo {author} {\bibfnamefont {J.}~\bibnamefont {Barrett}},
  \bibinfo {author} {\bibfnamefont {M.}~\bibnamefont {Leifer}}, \ and\ \bibinfo
  {author} {\bibfnamefont {A.}~\bibnamefont {Wilce}},\ }\href {\doibase
  10.1103/PhysRevLett.99.240501} {\bibfield  {journal} {\bibinfo  {journal}
  {Phys. Rev. Lett.}\ }\textbf {\bibinfo {volume} {99}},\ \bibinfo {pages}
  {240501} (\bibinfo {year} {2007})}\BibitemShut {NoStop}%
\bibitem [{\citenamefont {Barrett}\ \emph {et~al.}(2005)\citenamefont
  {Barrett}, \citenamefont {Hardy},\ and\ \citenamefont
  {Kent}}]{barrett_no_2005}%
  \BibitemOpen
  \bibfield  {author} {\bibinfo {author} {\bibfnamefont {J.}~\bibnamefont
  {Barrett}}, \bibinfo {author} {\bibfnamefont {L.}~\bibnamefont {Hardy}}, \
  and\ \bibinfo {author} {\bibfnamefont {A.}~\bibnamefont {Kent}},\ }\href
  {\doibase 10.1103/PhysRevLett.95.010503} {\bibfield  {journal} {\bibinfo
  {journal} {Phys. Rev. Lett.}\ }\textbf {\bibinfo {volume} {95}},\ \bibinfo
  {pages} {010503} (\bibinfo {year} {2005})}\BibitemShut {NoStop}%
\bibitem [{\citenamefont {Scarani}\ \emph {et~al.}(2006)\citenamefont
  {Scarani}, \citenamefont {Gisin}, \citenamefont {Brunner}, \citenamefont
  {Masanes}, \citenamefont {Pino},\ and\ \citenamefont
  {Acín}}]{scarani_secrecy_2006}%
  \BibitemOpen
  \bibfield  {author} {\bibinfo {author} {\bibfnamefont {V.}~\bibnamefont
  {Scarani}}, \bibinfo {author} {\bibfnamefont {N.}~\bibnamefont {Gisin}},
  \bibinfo {author} {\bibfnamefont {N.}~\bibnamefont {Brunner}}, \bibinfo
  {author} {\bibfnamefont {L.}~\bibnamefont {Masanes}}, \bibinfo {author}
  {\bibfnamefont {S.}~\bibnamefont {Pino}}, \ and\ \bibinfo {author}
  {\bibfnamefont {A.}~\bibnamefont {Acín}},\ }\href {\doibase
  10.1103/PhysRevA.74.042339} {\bibfield  {journal} {\bibinfo  {journal} {Phys.
  Rev. A}\ }\textbf {\bibinfo {volume} {74}},\ \bibinfo {pages} {042339}
  (\bibinfo {year} {2006})}\BibitemShut {NoStop}%
\bibitem [{\citenamefont {van Dam}(2000)}]{van_dam_nonlocality_2000}%
  \BibitemOpen
  \bibfield  {author} {\bibinfo {author} {\bibfnamefont {W.}~\bibnamefont {van
  Dam}},\ }\href {http://cs.ucsb.edu/~vandam/oxford_thesis.pdf} {Ph.D.
  thesis},\ \bibinfo  {school} {University of Oxford} (\bibinfo {year}
  {2000})\BibitemShut {NoStop}%
\bibitem [{\citenamefont {Buhrman}\ \emph {et~al.}(2010)\citenamefont
  {Buhrman}, \citenamefont {Cleve}, \citenamefont {Massar},\ and\ \citenamefont
  {de~Wolf}}]{buhrman_nonlocality_2010}%
  \BibitemOpen
  \bibfield  {author} {\bibinfo {author} {\bibfnamefont {H.}~\bibnamefont
  {Buhrman}}, \bibinfo {author} {\bibfnamefont {R.}~\bibnamefont {Cleve}},
  \bibinfo {author} {\bibfnamefont {S.}~\bibnamefont {Massar}}, \ and\ \bibinfo
  {author} {\bibfnamefont {R.}~\bibnamefont {de~Wolf}},\ }\href {\doibase
  10.1103/RevModPhys.82.665} {\bibfield  {journal} {\bibinfo  {journal} {Rev.
  Mod. Phys.}\ }\textbf {\bibinfo {volume} {82}},\ \bibinfo {pages} {665}
  (\bibinfo {year} {2010})}\BibitemShut {NoStop}%
\bibitem [{\citenamefont {Linden}\ \emph {et~al.}(2007)\citenamefont {Linden},
  \citenamefont {Popescu}, \citenamefont {Short},\ and\ \citenamefont
  {Winter}}]{linden_quantum_2007}%
  \BibitemOpen
  \bibfield  {author} {\bibinfo {author} {\bibfnamefont {N.}~\bibnamefont
  {Linden}}, \bibinfo {author} {\bibfnamefont {S.}~\bibnamefont {Popescu}},
  \bibinfo {author} {\bibfnamefont {A.~J.}\ \bibnamefont {Short}}, \ and\
  \bibinfo {author} {\bibfnamefont {A.}~\bibnamefont {Winter}},\ }\href
  {\doibase 10.1103/PhysRevLett.99.180502} {\bibfield  {journal} {\bibinfo
  {journal} {Phys. Rev. Lett.}\ }\textbf {\bibinfo {volume} {99}},\ \bibinfo
  {pages} {180502} (\bibinfo {year} {2007})}\BibitemShut {NoStop}%
\bibitem [{\citenamefont {Ver~Steeg}\ and\ \citenamefont
  {Wehner}(2009)}]{steeg_relaxed_2008}%
  \BibitemOpen
  \bibfield  {author} {\bibinfo {author} {\bibfnamefont {G.}~\bibnamefont
  {Ver~Steeg}}\ and\ \bibinfo {author} {\bibfnamefont {S.}~\bibnamefont
  {Wehner}},\ }\href {http://dl.acm.org/citation.cfm?id=2011804.2011810}
  {\bibfield  {journal} {\bibinfo  {journal} {Quantum Info. Comput.}\ }\textbf
  {\bibinfo {volume} {9}},\ \bibinfo {pages} {801} (\bibinfo {year}
  {2009})}\BibitemShut {NoStop}%
\bibitem [{\citenamefont {Oppenheim}\ and\ \citenamefont
  {Wehner}(2010)}]{oppenheim_uncertainty_2010}%
  \BibitemOpen
  \bibfield  {author} {\bibinfo {author} {\bibfnamefont {J.}~\bibnamefont
  {Oppenheim}}\ and\ \bibinfo {author} {\bibfnamefont {S.}~\bibnamefont
  {Wehner}},\ }\href {\doibase 10.1126/science.1192065} {\bibfield  {journal}
  {\bibinfo  {journal} {Science}\ }\textbf {\bibinfo {volume} {330}},\ \bibinfo
  {pages} {1072} (\bibinfo {year} {2010})}\BibitemShut {NoStop}%
\bibitem [{\citenamefont {Skrzypczyk}\ and\ \citenamefont
  {Brunner}(2009)}]{skrzypczyk_couplers_2009}%
  \BibitemOpen
  \bibfield  {author} {\bibinfo {author} {\bibfnamefont {P.}~\bibnamefont
  {Skrzypczyk}}\ and\ \bibinfo {author} {\bibfnamefont {N.}~\bibnamefont
  {Brunner}},\ }\href {\doibase 10.1088/1367-2630/11/7/073014} {\bibfield
  {journal} {\bibinfo  {journal} {New J. Phys.}\ }\textbf {\bibinfo {volume}
  {11}},\ \bibinfo {pages} {073014} (\bibinfo {year} {2009})}\BibitemShut
  {NoStop}%
\bibitem [{\citenamefont {Skrzypczyk}\ \emph {et~al.}(2009)\citenamefont
  {Skrzypczyk}, \citenamefont {Brunner},\ and\ \citenamefont
  {Popescu}}]{skrzypczyk_emergence_2009}%
  \BibitemOpen
  \bibfield  {author} {\bibinfo {author} {\bibfnamefont {P.}~\bibnamefont
  {Skrzypczyk}}, \bibinfo {author} {\bibfnamefont {N.}~\bibnamefont {Brunner}},
  \ and\ \bibinfo {author} {\bibfnamefont {S.}~\bibnamefont {Popescu}},\ }\href
  {\doibase 10.1103/PhysRevLett.102.110402} {\bibfield  {journal} {\bibinfo
  {journal} {Phys. Rev. Lett.}\ }\textbf {\bibinfo {volume} {102}},\ \bibinfo
  {pages} {110402} (\bibinfo {year} {2009})}\BibitemShut {NoStop}%
\bibitem [{\citenamefont {Navascués}\ and\ \citenamefont
  {Wunderlich}(2010)}]{navascues_glance_2010}%
  \BibitemOpen
  \bibfield  {author} {\bibinfo {author} {\bibfnamefont {M.}~\bibnamefont
  {Navascués}}\ and\ \bibinfo {author} {\bibfnamefont {H.}~\bibnamefont
  {Wunderlich}},\ }\href {\doibase 10.1098/rspa.2009.0453} {\bibfield
  {journal} {\bibinfo  {journal} {Proc. Roy. Soc. A: Math. Phys. Eng. Sci.}\
  }\textbf {\bibinfo {volume} {466}},\ \bibinfo {pages} {881} (\bibinfo {year}
  {2010})}\BibitemShut {NoStop}%
\bibitem [{\citenamefont {Pawłowski}\ \emph {et~al.}(2009)\citenamefont
  {Pawłowski}, \citenamefont {Paterek}, \citenamefont {Kaszlikowski},
  \citenamefont {Scarani}, \citenamefont {Winter},\ and\ \citenamefont
  {Żukowski}}]{pawlowski_information_2009}%
  \BibitemOpen
  \bibfield  {author} {\bibinfo {author} {\bibfnamefont {M.}~\bibnamefont
  {Pawłowski}}, \bibinfo {author} {\bibfnamefont {T.}~\bibnamefont {Paterek}},
  \bibinfo {author} {\bibfnamefont {D.}~\bibnamefont {Kaszlikowski}}, \bibinfo
  {author} {\bibfnamefont {V.}~\bibnamefont {Scarani}}, \bibinfo {author}
  {\bibfnamefont {A.}~\bibnamefont {Winter}}, \ and\ \bibinfo {author}
  {\bibfnamefont {M.}~\bibnamefont {Żukowski}},\ }\href {\doibase
  10.1038/nature08400} {\bibfield  {journal} {\bibinfo  {journal} {Nature}\
  }\textbf {\bibinfo {volume} {461}},\ \bibinfo {pages} {1101} (\bibinfo {year}
  {2009})}\BibitemShut {NoStop}%
\bibitem [{\citenamefont {Allcock}\ \emph {et~al.}(2009)\citenamefont
  {Allcock}, \citenamefont {Brunner}, \citenamefont {Pawlowski},\ and\
  \citenamefont {Scarani}}]{allcock_recovering_2009}%
  \BibitemOpen
  \bibfield  {author} {\bibinfo {author} {\bibfnamefont {J.}~\bibnamefont
  {Allcock}}, \bibinfo {author} {\bibfnamefont {N.}~\bibnamefont {Brunner}},
  \bibinfo {author} {\bibfnamefont {M.}~\bibnamefont {Pawlowski}}, \ and\
  \bibinfo {author} {\bibfnamefont {V.}~\bibnamefont {Scarani}},\ }\href
  {\doibase 10.1103/PhysRevA.80.040103} {\bibfield  {journal} {\bibinfo
  {journal} {Phys. Rev. A}\ }\textbf {\bibinfo {volume} {80}},\ \bibinfo
  {pages} {040103} (\bibinfo {year} {2009})}\BibitemShut {NoStop}%
\bibitem [{\citenamefont {Zeilinger}(1999)}]{zeilinger_foundational_1999}%
  \BibitemOpen
  \bibfield  {author} {\bibinfo {author} {\bibfnamefont {A.}~\bibnamefont
  {Zeilinger}},\ }\href
  {http://www.springerlink.com/index/jt342534x711542g.pdf} {\bibfield
  {journal} {\bibinfo  {journal} {Found. Phys.}\ }\textbf {\bibinfo {volume}
  {29}},\ \bibinfo {pages} {631–643} (\bibinfo {year} {1999})}\BibitemShut
  {NoStop}%
\bibitem [{\citenamefont {Hardy}(2001)}]{hardy_quantum_2001}%
  \BibitemOpen
  \bibfield  {author} {\bibinfo {author} {\bibfnamefont {L.}~\bibnamefont
  {Hardy}},\ }\href {http://arxiv.org/abs/quant-ph/0101012} {\bibfield
  {journal} {\bibinfo  {journal} {{arXiv}:quant-ph/0101012}\ } (\bibinfo {year}
  {2001})}\BibitemShut {NoStop}%
\bibitem [{\citenamefont {Spekkens}(2007)}]{spekkens_defense_2004}%
  \BibitemOpen
  \bibfield  {author} {\bibinfo {author} {\bibfnamefont {R.~W.}\ \bibnamefont
  {Spekkens}},\ }\href {http://arxiv.org/abs/quant-ph/0401052} {\bibfield
  {journal} {\bibinfo  {journal} {Phys. Rev. A}\ }\textbf {\bibinfo {volume}
  {75}},\ \bibinfo {pages} {032110} (\bibinfo {year} {2007})}\BibitemShut
  {NoStop}%
\bibitem [{\citenamefont {Dakic}\ and\ \citenamefont
  {Brukner}(2009)}]{dakic_quantum_2009}%
  \BibitemOpen
  \bibfield  {author} {\bibinfo {author} {\bibfnamefont {B.}~\bibnamefont
  {Dakic}}\ and\ \bibinfo {author} {\bibfnamefont {C.}~\bibnamefont
  {Brukner}},\ }\enquote {\bibinfo {title} {Quantum theory and beyond: Is
  entanglement special?}}\ \ (\bibinfo  {publisher} {Cambridge University
  Press, Cambridge, UK},\ \bibinfo {year} {2009})\ pp.\ \bibinfo {pages}
  {365--392},\ \Eprint {http://arxiv.org/abs/arXiv:0911.0695} {arXiv:0911.0695}
  \BibitemShut {NoStop}%
\bibitem [{\citenamefont {Chiribella}\ \emph {et~al.}(2011)\citenamefont
  {Chiribella}, \citenamefont {D'Ariano},\ and\ \citenamefont
  {Perinotti}}]{chiribella_informational_2010}%
  \BibitemOpen
  \bibfield  {author} {\bibinfo {author} {\bibfnamefont {G.}~\bibnamefont
  {Chiribella}}, \bibinfo {author} {\bibfnamefont {G.~M.}\ \bibnamefont
  {D'Ariano}}, \ and\ \bibinfo {author} {\bibfnamefont {P.}~\bibnamefont
  {Perinotti}},\ }\href
  {http://journals.aps.org/pra/abstract/10.1103/PhysRevA.84.012311} {\bibfield
  {journal} {\bibinfo  {journal} {Phys. Rev. A}\ }\textbf {\bibinfo {volume}
  {84}},\ \bibinfo {pages} {012311} (\bibinfo {year} {2011})}\BibitemShut
  {NoStop}%
\bibitem [{\citenamefont {Masanes}\ and\ \citenamefont
  {M\"uller}(2011)}]{Masanes}%
  \BibitemOpen
  \bibfield  {author} {\bibinfo {author} {\bibfnamefont {L.}~\bibnamefont
  {Masanes}}\ and\ \bibinfo {author} {\bibfnamefont {M.}~\bibnamefont
  {M\"uller}},\ }\href
  {http://iopscience.iop.org/article/10.1088/1367-2630/13/6/063001/meta}
  {\bibfield  {journal} {\bibinfo  {journal} {New J. Phys.}\ }\textbf {\bibinfo
  {volume} {13}},\ \bibinfo {pages} {063001} (\bibinfo {year}
  {2011})}\BibitemShut {NoStop}%
\bibitem [{\citenamefont {Hardy}(2005)}]{hardy_probability_2005}%
  \BibitemOpen
  \bibfield  {author} {\bibinfo {author} {\bibfnamefont {L.}~\bibnamefont
  {Hardy}},\ }\href {http://arxiv.org/abs/gr-qc/0509120} {\bibfield  {journal}
  {\bibinfo  {journal} {{arXiv}:gr-qc/0509120}\ } (\bibinfo {year}
  {2005})}\BibitemShut {NoStop}%
\bibitem [{\citenamefont {Hardy}(2007)}]{hardy_towards_2007}%
  \BibitemOpen
  \bibfield  {author} {\bibinfo {author} {\bibfnamefont {L.}~\bibnamefont
  {Hardy}},\ }\href
  {http://iopscience.iop.org/article/10.1088/1751-8113/40/12/S12/meta}
  {\bibfield  {journal} {\bibinfo  {journal} {J. Phys. A: Math. Theoret.}\
  }\textbf {\bibinfo {volume} {40}},\ \bibinfo {pages} {3081} (\bibinfo {year}
  {2007})}\BibitemShut {NoStop}%
\bibitem [{\citenamefont {Rovelli}(1996)}]{rovelli_relational_1996}%
  \BibitemOpen
  \bibfield  {author} {\bibinfo {author} {\bibfnamefont {C.}~\bibnamefont
  {Rovelli}},\ }\href {\doibase 10.1007/BF02302261} {\bibfield  {journal}
  {\bibinfo  {journal} {Int. J. Theor. Phys.}\ }\textbf {\bibinfo {volume}
  {35}},\ \bibinfo {pages} {1637} (\bibinfo {year} {1996})}\BibitemShut
  {NoStop}%
\bibitem [{\citenamefont {Grinbaum}(2003)}]{grinbijqi}%
  \BibitemOpen
  \bibfield  {author} {\bibinfo {author} {\bibfnamefont {A.}~\bibnamefont
  {Grinbaum}},\ }\href
  {http://www.worldscientific.com/doi/abs/10.1142/S0219749903000309?cookieSet=1}
  {\bibfield  {journal} {\bibinfo  {journal} {Int. J. Quantum Inform.}\
  }\textbf {\bibinfo {volume} {1}},\ \bibinfo {pages} {289} (\bibinfo {year}
  {2003})}\BibitemShut {NoStop}%
\bibitem [{\citenamefont {Chiribella}\ \emph {et~al.}(2013)\citenamefont
  {Chiribella}, \citenamefont {D’Ariano}, \citenamefont {Perinotti},\ and\
  \citenamefont {Valiron}}]{chiribella_quantum_2013}%
  \BibitemOpen
  \bibfield  {author} {\bibinfo {author} {\bibfnamefont {G.}~\bibnamefont
  {Chiribella}}, \bibinfo {author} {\bibfnamefont {G.~M.}\ \bibnamefont
  {D’Ariano}}, \bibinfo {author} {\bibfnamefont {P.}~\bibnamefont
  {Perinotti}}, \ and\ \bibinfo {author} {\bibfnamefont {B.}~\bibnamefont
  {Valiron}},\ }\href {\doibase 10.1103/PhysRevA.88.022318} {\bibfield
  {journal} {\bibinfo  {journal} {Phys. Rev. A}\ }\textbf {\bibinfo {volume}
  {88}},\ \bibinfo {pages} {022318} (\bibinfo {year} {2013})}\BibitemShut
  {NoStop}%
\bibitem [{\citenamefont {Oreshkov}\ \emph {et~al.}(2012)\citenamefont
  {Oreshkov}, \citenamefont {Costa},\ and\ \citenamefont
  {Brukner}}]{oreshkov_quantum_2012}%
  \BibitemOpen
  \bibfield  {author} {\bibinfo {author} {\bibfnamefont {O.}~\bibnamefont
  {Oreshkov}}, \bibinfo {author} {\bibfnamefont {F.}~\bibnamefont {Costa}}, \
  and\ \bibinfo {author} {\bibfnamefont {{\v{C}}.}~\bibnamefont {Brukner}},\
  }\href {http://www.nature.com/ncomms/journal/v3/n10/full/ncomms2076.html}
  {\bibfield  {journal} {\bibinfo  {journal} {Nature Commun.}\ }\textbf
  {\bibinfo {volume} {3}},\ \bibinfo {pages} {1092} (\bibinfo {year}
  {2012})}\BibitemShut {NoStop}%
\bibitem [{\citenamefont {Brukner}(2015)}]{brukner_bounding_2014}%
  \BibitemOpen
  \bibfield  {author} {\bibinfo {author} {\bibfnamefont {{\v{C}}.}~\bibnamefont
  {Brukner}},\ }\href {http://arxiv.org/abs/1404.0721} {\bibfield  {journal}
  {\bibinfo  {journal} {New J. Phys.}\ }\textbf {\bibinfo {volume} {17}},\
  \bibinfo {pages} {073020} (\bibinfo {year} {2015})}\BibitemShut {NoStop}%
\bibitem [{\citenamefont {Rényi}(1959{\natexlab{a}})}]{renyi_new_1959}%
  \BibitemOpen
  \bibfield  {author} {\bibinfo {author} {\bibfnamefont {A.}~\bibnamefont
  {Rényi}},\ }\href {\doibase 10.1007/BF02063300} {\bibfield  {journal}
  {\bibinfo  {journal} {Acta Math. Acad. Sci. Hung.}\ }\textbf {\bibinfo
  {volume} {10}},\ \bibinfo {pages} {217} (\bibinfo {year}
  {1959}{\natexlab{a}})}\BibitemShut {NoStop}%
\bibitem [{\citenamefont {Rényi}(1959{\natexlab{b}})}]{renyi_measures_1959}%
  \BibitemOpen
  \bibfield  {author} {\bibinfo {author} {\bibfnamefont {A.}~\bibnamefont
  {Rényi}},\ }\href {\doibase 10.1007/BF02024507} {\bibfield  {journal}
  {\bibinfo  {journal} {Acta Math. Acad. Sci. Hung.}\ }\textbf {\bibinfo
  {volume} {10}},\ \bibinfo {pages} {441} (\bibinfo {year}
  {1959}{\natexlab{b}})}\BibitemShut {NoStop}%
\bibitem [{\citenamefont {Erkip}\ and\ \citenamefont
  {Cover}(1998)}]{Erkip98theefficiency}%
  \BibitemOpen
  \bibfield  {author} {\bibinfo {author} {\bibfnamefont {E.}~\bibnamefont
  {Erkip}}\ and\ \bibinfo {author} {\bibfnamefont {T.~M.}\ \bibnamefont
  {Cover}},\ }\href
  {http://ieeexplore.ieee.org/xpl/articleDetails.jsp?arnumber=669153}
  {\bibfield  {journal} {\bibinfo  {journal} {IEEE Trans. Inform. Theory}\
  }\textbf {\bibinfo {volume} {44}},\ \bibinfo {pages} {1026} (\bibinfo {year}
  {1998})}\BibitemShut {NoStop}%
\bibitem [{\citenamefont {Hirschfeld}(1935)}]{hirschfeld_connection_1935}%
  \BibitemOpen
  \bibfield  {author} {\bibinfo {author} {\bibfnamefont {H.~O.}\ \bibnamefont
  {Hirschfeld}},\ }\href {\doibase 10.1017/S0305004100013517} {\bibfield
  {journal} {\bibinfo  {journal} {Math. Proc. Cambr. Philos. Soc.}\ }\textbf
  {\bibinfo {volume} {31}},\ \bibinfo {pages} {520} (\bibinfo {year}
  {1935})}\BibitemShut {NoStop}%
\bibitem [{\citenamefont {Gebelein}(1941)}]{Gebelein}%
  \BibitemOpen
  \bibfield  {author} {\bibinfo {author} {\bibfnamefont {H.}~\bibnamefont
  {Gebelein}},\ }\href {\doibase 10.1002/zamm.19410210604} {\bibfield
  {journal} {\bibinfo  {journal} {ZAMM J. Appl. Math. Mech./Z. Angewandte Math.
  Mech.}\ }\textbf {\bibinfo {volume} {21}},\ \bibinfo {pages} {364} (\bibinfo
  {year} {1941})}\BibitemShut {NoStop}%
\bibitem [{\citenamefont {Anantharam}\ \emph {et~al.}(2013)\citenamefont
  {Anantharam}, \citenamefont {Gohari}, \citenamefont {Kamath},\ and\
  \citenamefont {Nair}}]{anantharam_maximal_2013}%
  \BibitemOpen
  \bibfield  {author} {\bibinfo {author} {\bibfnamefont {V.}~\bibnamefont
  {Anantharam}}, \bibinfo {author} {\bibfnamefont {A.}~\bibnamefont {Gohari}},
  \bibinfo {author} {\bibfnamefont {S.}~\bibnamefont {Kamath}}, \ and\ \bibinfo
  {author} {\bibfnamefont {C.}~\bibnamefont {Nair}},\ }\href
  {http://arxiv.org/abs/1304.6133} {\bibfield  {journal} {\bibinfo  {journal}
  {{arXiv}:1304.6133}\ } (\bibinfo {year} {2013})}\BibitemShut {NoStop}%
\bibitem [{\citenamefont {Wakakuwa}\ and\ \citenamefont
  {Murao}(2012)}]{wakakuwa_chain_2012}%
  \BibitemOpen
  \bibfield  {author} {\bibinfo {author} {\bibfnamefont {E.}~\bibnamefont
  {Wakakuwa}}\ and\ \bibinfo {author} {\bibfnamefont {M.}~\bibnamefont
  {Murao}},\ }\href {http://stacks.iop.org/1367-2630/14/i=11/a=113037}
  {\bibfield  {journal} {\bibinfo  {journal} {New J. Phys.}\ }\textbf {\bibinfo
  {volume} {14}},\ \bibinfo {pages} {113037} (\bibinfo {year}
  {2012})}\BibitemShut {NoStop}%
\end{thebibliography}%

\end{document}